\documentclass[11pt]{article}

\usepackage[margin=1in]{geometry}
\usepackage{amsthm}
\usepackage{mathtools}
\usepackage{amssymb}
\usepackage{enumitem}
\usepackage[hidelinks]{hyperref}

\newtheorem{theorem}{Theorem}
\newtheorem{lemma}[theorem]{Lemma}
\newtheorem{corollary}[theorem]{Corollary}

\newcommand{\E}{\mathbb E}
\newcommand{\1}{\mathbf 1}

\title{A Tight Scale-Locality Bound for Partial Detection in Non-Adaptive Group Testing}
\author{Nader H. Bshouty\\
Technion\\
\texttt{bshouty@cs.technion.ac.il}}
\date{}

\begin{document}

\maketitle

\begin{abstract}
We give a lower bound for randomized non-adaptive group testing when the goal is to find any $\ell$ defective items but the total number $d$ of defectives is unknown. Bshouty and Haddad-Zaknoon proved an upper bound of $O(\ell\log^2 n)$ tests and a lower bound of
\[
  \Omega\!\left(\frac{\ell\log^2 n}{\log \ell+\log\log n}\right).
\]
We prove the matching lower bound. More generally, we show that every randomized non-adaptive algorithm that succeeds with constant probability for every defective set must use
\[
  \Omega\!\left(\ell\log^2(n/\ell)\right)
\]
tests.

The proof is as follows. At a fixed value of $d$, finding $\ell$ defectives requires about $\ell\log(n/d)$ bits of information. On the other hand, one fixed group test is informative only when its size is tuned to the scale of $d$; across all logarithmic scales of $d$, a single test contributes only $O(1)$ bits. Summing over all scales gives the lower bound. 

We also record the matching upper bound $$O\!\left(\ell\log^2(n/\ell)\right),$$obtained by running the known-$d$ algorithm in parallel over dyadic guesses for $d$. Thus the randomized non-adaptive complexity of unknown-$d$ partial detection is $\Theta\!\left(\ell\log^2(n/\ell)\right)$ for constant success probability.
\end{abstract}

\section{Introduction}

Group testing is a classical method for identifying defective items in a large population using pooled tests. Given a ground set $[n]=\{1,\ldots,n\}$ and an unknown defective set $I\subseteq[n]$, a group test is a subset $Q\subseteq[n]$ whose answer is positive if and only if $Q\cap I\ne\emptyset$. The model was introduced by Dorfman in the context of economical blood testing \cite{Dorfman1943}, and has since become a central topic in combinatorics, algorithms, statistics, coding theory, and information theory.

The literature on group testing is extensive. Classical adaptive procedures include the work of Sobel and Groll \cite{SobelGroll1959} and Hwang's generalized binary splitting method \cite{Hwang1972}. Competitive and adaptive variants have been studied in \cite{BarNoyHwangKesslerKutten1994,DuHwangCompetitive1993,DuPark1994,DuXueSunCheng1994,ChengDuXu2014,SchlaghoffTriesch2005,WuChengDu2022}. Non-adaptive group testing is closely connected to pooling designs, superimposed codes, disjunct matrices, cover-free families, and coding theory; see, for example, \cite{BaldingBrunoTorneyKnill1996,DuHwang1993,DuHwangDNA2006a,DuHwangDNA2006b,KautzSingleton1964,DyachkovRykov1982,ChenHwang2007,Furedi1996,Ruszinko1994,PoratRothschild2011,Roth2006}. Randomized group testing and algorithms with optimal or near-optimal query complexity have been investigated in \cite{BshoutyDiabKawarShahla2017,BshoutyHaddadHaddadZaknoon2020,DamaschkeMuhammad2012}. The related problem of estimating the number of defectives has been studied in \cite{FalahatgarJafarpourOrlitskyPichapatiSuresh2016,BshoutyEtAl2018,Bshouty2019}.

Group testing has also found many applications. It appears in DNA library screening and pooling designs \cite{DuHwang1993,DuHwangDNA2006a,DuHwangDNA2006b}, product quality control and sequential screening \cite{Li1962}, image compression \cite{HongLadner2002}, data-stream and frequent-item problems \cite{CormodeMuthukrishnan2005}, and multiaccess communications \cite{Wolf1985}. More recently, neural and machine-learning motivated forms of group testing have been studied in \cite{LiangZou2021}. During the COVID-19 pandemic, group testing was widely discussed as a way to accelerate PCR testing and reduce laboratory costs \cite{BenAmiEtAl2020,CabreraEtAl2020,EisHubingerEtAl2020,GollierGossner2020,MentusRomeoDiPaola2020,ShaniNarkissEtAl2020,YelinEtAl2020,HaddadZaknoon2022}. Related motivations also arise in abnormal-event detection and large-scale image classification, where one may wish to find a small number of positive or abnormal instances among many candidates \cite{KuppusamyBharathi2022,WangSiau2019,XieEtAl2017,LiangZou2021}.

In this paper we study a partial-detection variant of group testing. Instead of identifying all $d=|I|$ defective items, the goal is to output any $\ell$ defective items, where $\ell\le d$. This problem was considered by Ahlswede, Deppe, and Lebedev \cite{AhlswedeDeppeLebedev2012}, who studied the case of finding one defective item and gave bounds for related models. Katona \cite{Katona2011} studied the case of finding at least one excellent element, including deterministic non-adaptive and two-round settings. Gerbner and Vizer \cite{GerbnerVizer2020} generalized these results to deterministic $r$-round algorithms and arbitrary $\ell$.

Partial detection is useful in settings where finding a small number of
confirmed defectives is already enough to trigger a downstream action, such as
triage, screening, or quickly isolating a small infected cluster, and full
identification of all \(d\) defectives may be unnecessary or too costly.

The systematic study of detecting $\ell$ defective items from $d$ defectives was carried out by Bshouty and Haddad-Zaknoon in \cite{BshoutyHaddadZaknoon2023}. They considered adaptive and non-adaptive algorithms, deterministic and randomized algorithms, and both the case where $d$ is known up to a constant factor and the case where no prior information about $d$ is available. In the randomized non-adaptive setting with unknown $d$, they proved an upper bound of $O(\ell\log^2 n)$ tests and a lower bound of
\[
  \Omega\!\left(\frac{\ell\log^2 n}{\log \ell+\log\log n}\right).
\]
Thus their work left a logarithmic gap in the most difficult randomized non-adaptive unknown-$d$ regime.

Our contribution is to close this gap. We prove that every randomized non-adaptive algorithm which, with constant probability, outputs $\ell$ defective items for every defective set must use
\[
  \Omega\!\left(\ell\log^2(n/\ell)\right)
\]
tests. We also record the matching upper bound
\[
  O\!\left(\ell\log^2(n/\ell)\right),
\]
obtained by running the known-$d$ randomized non-adaptive algorithm of \cite{BshoutyHaddadZaknoon2023} in parallel over dyadic guesses for $d$. Therefore the randomized non-adaptive complexity of partial detection with unknown $d$ is
\[
  \Theta\!\left(\ell\log^2(n/\ell)\right)
\]
for constant success probability.

The proof is based on a scale-locality principle. At a fixed value of $d$, outputting $\ell$ defective items requires about $\ell\log(n/d)$ bits of information. On the other hand, a fixed group test is informative only when its size is tuned to the scale of $d$: if the test is too small, it is almost always negative; if it is too large, it is almost always positive. Hence a single test contributes only $O(1)$ bits total over all logarithmic scales of $d$. Summing the fixed-scale information requirements over all scales yields the lower bound. This perspective is related to the scale-locality ideas used in the tight lower bound for non-adaptive estimation of the number of defectives \cite{BshoutyCheungHarcosHatamiOstuni2025}, but the argument here is an entropy direct-sum over all scales rather than a two-distribution total-variation argument.

\subsection{Our Technique}
The proof of the lower bound is based on a scale-locality principle. We consider many possible
values of the number of defectives,
  $d_0,d_1,d_2,\ldots,$
where each value is twice the previous one. First fix one such value~$d$. If the
defective set $I$ is chosen uniformly among all $d$-subsets of $[n]$, then any
one fixed $\ell$-set is contained in $I$ with probability roughly $(d/n)^\ell$.
Therefore, in order to output an $\ell$-set contained in $I$ with constant
probability, the transcript of the algorithm must reveal about
  $\ell\log(n/d)$
bits of information about $I$.

We then look at a single fixed test $Q$. Let $k=|Q|$. If the number of
defectives is $d$, then the expected number of defectives in $Q$ is
  ${kd}/{n}$.
If this quantity is much smaller than $1$, then $Q$ probably contains no
defective item, so the answer to the test is almost always $0$. If this quantity
is much larger than $1$, then $Q$ probably contains at least one defective item,
so the answer is almost always~$1$. In both cases the answer is nearly
predictable and gives very little information. Thus the test can be informative
only for those values of $d$ for which
  ${kd}/{n}$
is close to $1$. Since the possible values of $d$ grow geometrically, this can
happen for only constantly many scales. Hence one fixed test contributes only
$O(1)$ bits total over all logarithmic scales of $d$.

Finally, we sum over all scales. Each scale $d$ requires about
$\ell\log(n/d)$ bits, while each test supplies only $O(1)$ bits in total across
all scales. This gives the lower bound
\[
  \Omega\!\left(\ell\log^2(n/\ell)\right).
\]
This perspective is related to the scale-locality ideas used in the tight lower
bound for non-adaptive estimation of the number of defectives
\cite{BshoutyCheungHarcosHatamiOstuni2025}. The difference is that the
estimation lower bound uses a two-distribution total-variation argument,
whereas the present proof uses an entropy direct-sum over all scales.

The upper bound uses a parallel guessing strategy. If a constant-factor estimate
of $d$ is known, then the randomized non-adaptive algorithm of
\cite{BshoutyHaddadZaknoon2023} finds $\ell$ defective items with the optimal
fixed-scale number of tests. When $d$ is unknown, we run this fixed-scale
algorithm in parallel for all dyadic guesses of $d$, starting from $\ell$ and
going up to $n$. One of these guesses is within a constant factor of the true
value of $d$, so the corresponding copy of the algorithm succeeds. Since all
copies are run in parallel, the resulting algorithm is still non-adaptive. A
direct summation of the fixed-scale costs over the dyadic guesses gives the
matching upper bound $O(\ell\log^2(n/\ell))$.

\section{Model and Main Result}

For a defective set $I\subseteq[n]$ and a test $Q\subseteq[n]$, define
\[
  T_I(Q)=\mathbf 1[Q\cap I\ne\varnothing].
\]
Thus $T_I(Q)=1$ if $Q$ contains at least one defective item, and
$T_I(Q)=0$ otherwise.

A randomized non-adaptive algorithm that detects \(\ell\) defective items
using \(q\) tests first samples a random
seed $S$. After $S=s$ is fixed, the algorithm chooses deterministic tests
\[
  Q_1(s),\dots,Q_q(s)\subseteq[n].
\]
Given a defective set $I$, the answer vector is
\[
  Y(I,s)=\bigl(T_I(Q_1(s)),\dots,T_I(Q_q(s))\bigr)\in\{0,1\}^q.
\]
The algorithm then outputs a set
\[
  L=L(s,Y(I,s))\subseteq[n].
\]
We say that the algorithm {\it succeeds} on $I$ if
\[
  |L|=\ell
  \qquad\text{and}\qquad
  L\subseteq I.
\]
The algorithm has success probability at least $p$ if, for every defective set
$I$ with $|I|\ge \ell$, the probability over the random seed $S$ that it
succeeds on $I$ is at least $p$.

\begin{theorem}[Main theorem]\label{thm:main}
There is an absolute constant $c_0>0$ such that the following holds. Let\footnote{The restriction \(\ell\le n/16\) is used only to focus on the
sparse partial-detection regime. When \(\ell=\Theta(n)\), the expression
\(\ell\log^2(n/\ell)\) is \(\Theta(n)\), and the trivial deterministic
non-adaptive algorithm that tests all singletons gives a linear upper bound.}
 $1\le \ell\le n/16$. Suppose a randomized non-adaptive group testing algorithm succeeds with probability at least $2/3$ on every defective set $I\subseteq[n]$ with $|I|\ge \ell$. Then the number of tests satisfies
\[
  q\ge c_0\,\ell\log^2(n/\ell).
\]
\end{theorem}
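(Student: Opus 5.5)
Fix the seed structure and argue via Yao's principle on a product of scales. Set $m=\lfloor\log_2(n/\ell)\rfloor-2$ and consider scales $d_j=\ell 2^{j}$ for $j=0,1,\dots,m$, so that $n/d_j\ge 4$. For each scale let $\mu_j$ be the uniform distribution on $d_j$-subsets of $[n]$. Since the algorithm succeeds with probability $\ge 2/3$ on every $I$, averaging over $I\sim\mu_j$ and over $S$ shows that for every $j$ there is success probability $\ge 2/3$ under $\mu_j$. I will not fix a single deterministic seed per scale (the seed must be common to all scales, since the same tests serve all $d$). Instead I take the expectation over $S$ of the sum over $j$ and then fix a good seed $s$ for which $\sum_j \Pr_{I\sim\mu_j}[\text{success}]\ge \tfrac{2}{3}(m+1)$, so at least a constant fraction of the scales, say a set $J$ with $|J|\ge (m+1)/3$, have success probability $\ge 1/2$ under the fixed deterministic test family $Q_1,\dots,Q_q$.

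\emph{Information needed per scale.} For $j\in J$, let $Y=Y(I,s)$ with $I\sim\mu_j$. Any fixed $\ell$-set $L$ lies in $I$ with probability $\binom{n-\ell}{d_j-\ell}/\binom{n}{d_j}\le (d_j/n)^\ell$. A standard Fano-type argument (or directly: conditioned on $Y=y$ the output is a fixed set, so $\Pr[\text{success}]\le \sum_y \Pr[Y=y,\ L(y)\subseteq I]$) then gives the following. Since $\Pr[\text{success}]\ge 1/2$, the mutual information $I(Y;I)=H(Y)$ must satisfy $H(Y)\ge \tfrac12\ell\log(n/d_j)-O(1)$. I would prove this via the inequality $\Pr[\text{success}]\le \bigl(H(Y)+1\bigr)/\log\bigl((n/d_j)^\ell\bigr)$, obtained by comparing the true joint law with the product law in which $L(Y)$ and $I$ are independent, using the data-processing inequality for KL divergence on the success event.

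\emph{Scale-locality.} Subadditivity gives $H(Y)\le\sum_{i=1}^q h(p_{i,j})$, where $p_{i,j}=\Pr_{\mu_j}[Q_i\cap I\neq\emptyset]$ and $h$ is the binary entropy. With $k=|Q_i|$ we have $1-p_{i,j}\approx (1-k/n)^{d_j}$, more precisely $e^{-2kd_j/n}\le 1-p_{i,j}\le e^{-kd_j/n}$ when $k\le n/2$ (I will handle large $k$ separately, as those tests are then almost surely positive for $d_j\ge\ell$). Writing $x_j=kd_j/n$, I get $h(p_{i,j})=O(x_j\log(e/x_j))$ when $x_j\le 1$ and $h(p_{i,j})=O(x_j e^{-x_j})$ when $x_j\ge 1$. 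Since $x_j$ doubles with $j$, both tails are geometric, so $\sum_j h(p_{i,j})\le C$ for an absolute constant $C$. This summation, together with the uniform bound on $1-p_{i,j}$ for large tests, is the key technical step. I expect it to be the main obstacle only in keeping constants clean; the hypergeometric-versus-binomial comparison must be done carefully.

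\emph{Summing.} Combining the pieces,
\[
Cq\ \ge\ \sum_{j\in J}H(Y^{(j)})\ \ge\ \sum_{j\in J}\Bigl(\tfrac12\ell\log(n/d_j)-O(1)\Bigr).
\]
Because $J$ has size at least $(m+1)/3$ and $\log(n/d_j)=\log(n/\ell)-j$, the worst case is when $J$ consists of the largest $j$. Even then the sum is $\ge \sum_{t\ge 2}^{(m+1)/3}\tfrac12\ell t=\Omega(\ell m^2)=\Omega(\ell\log^2(n/\ell))$. One caveat is that the additive $O(1)$ losses must be dominated, so I would restrict $J$ to scales with $\ell\log(n/d_j)$ larger than a big constant. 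This discards only $O(1)$ scales, and is harmless when $n/\ell$ exceeds a large constant. For smaller $n/\ell\ge16$, the bound $q\ge c_0\ell$ follows from the single scale $d=\ell$, where the information needed is $\Omega(\ell)$.
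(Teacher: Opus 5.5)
Your proposal is correct and is essentially the paper's proof. It has the same three ingredients: a Fano-type bound forcing $\Omega(\ell\log(n/d_j))$ bits of answer entropy at each dyadic scale, the per-test scale-locality estimate $\sum_j h(p_{i,j})=O(1)$ coming from the doubling of $x_j=kd_j/n$, and a direct sum over scales via subadditivity.

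The one substantive difference is how you handle the randomness. You fix a single good seed and keep only a third of the scales, which is what forces the worst-case placement of $J$, the discarding of $O(1)$ scales, and the separate small-$n/\ell$ case. The paper instead lower-bounds $H(Y_j\mid S)$ through the chain rule and upper-bounds it seed by seed, so every scale keeps success probability $2/3$ and no seed selection is needed.

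Two minor points in your scale-locality step. The lower estimate on $1-p_{i,j}$ and the special handling of large $k$ are unnecessary, since only $1-p_{i,j}\le e^{-x_j}$ is used and it holds for every $k$. Also, large tests are not almost surely positive when, say, $\ell=d_0=1$.
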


\begin{theorem}[Matching upper bound]\label{thm:upper}
There is an absolute constant $C_0>0$ such that the following holds. Let
$1\le \ell\le n/16$. There is a randomized non-adaptive group testing algorithm
which, without knowing $d=|I|$, succeeds with constant probability on every
defective set $I\subseteq[n]$ with $|I|\ge \ell$ and uses at most
\[
  C_0\,\ell\log^2(n/\ell)
\]
tests.
\end{theorem}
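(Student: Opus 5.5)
The statement to prove is Theorem~\ref{thm:upper}, the upper bound. The plan is the parallel-guessing construction described in the introduction. The building block is the fixed-scale randomized non-adaptive algorithm of \cite{BshoutyHaddadZaknoon2023}. When $d$ is known up to a constant factor (say $D\le d<2D$ for a known $D\ge \ell$), it finds $\ell$ defectives with constant success probability using $O(\ell\log(n/D))$ tests (when $D\le n/2$; for $D$ close to $n$, $O(\ell)$ tests suffice).

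If I want to be self-contained rather than cite, the fixed-scale algorithm can be sketched as follows. Hash $[n]$ into $m=\Theta(\ell)$ random buckets restricted to a random subsample of rate $\Theta(\ell/D)$. Then each bucket contains exactly one defective with constant probability, and $\Omega(\ell)$ buckets are isolating with constant probability by a second-moment or Chernoff argument. Within an isolating bucket, of expected size $O(n/D)$, the defective is identified by $O(\log(n/D))$ bit tests: test the subset whose indices have bit $j=1$. Verification that a bucket is isolating is handled by also testing the complementary bit sets, and rejecting buckets where both a set and its complement answer positively.

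The steps are as follows.

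\begin{enumerate}
\item For $i=0,1,\dots,t$ with $t=\lceil\log_2(n/\ell)\rceil$, set $D_i=2^i\ell$. Run independent copies $A_i$ of the fixed-scale algorithm with guess $D_i$, all tests issued together; the result is non-adaptive.
\item The true $d\ge\ell$ lies in $[D_i,2D_i)$ for some $i\le t$, so copy $A_i$ succeeds with constant probability.
\item Choose which copy's output to return. This requires that every copy, whether or not its guess was right, outputs only \emph{verified} defectives. In the construction above, an item declared by a bucket is certified defective via the complement checks: the decoded item $x$ is defective whenever the bucket contains $\ge1$ defective and all bit-pairs are consistent. That is exactly the case where the bucket contains exactly one defective, namely $x$, since two distinct defectives would differ in some bit, making both sides of that bit positive. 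So each copy outputs a set of certified defectives (possibly fewer than $\ell$). The final algorithm outputs any $\ell$ elements of the union of the certified sets, which has size $\ge \ell$ whenever $A_i$ succeeds.
\item Count tests:
\[
\sum_{i=0}^{t} O\bigl(\ell\,(1+\log(n/(2^i\ell)))\bigr)=O\Bigl(\ell\sum_{i=0}^{t}(t-i+1)\Bigr)=O(\ell t^2)=O(\ell\log^2(n/\ell)).
\]
Here $\ell\le n/16$ ensures $\log(n/\ell)\ge 4$, so the additive $O(\ell t)$ term is absorbed.
\end{enumerate}

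The main obstacle is step 3. With unknown $d$, wrong-scale copies must not produce false positives. A black-box call to \cite{BshoutyHaddadZaknoon2023} does not obviously guarantee this, so I would rely on the self-verifying bucket-decoding design, where soundness holds deterministically for every $I$. The constant success probability then comes from the single correct scale alone, and the needed correctness statement reduces to the sampling lemma that $\Omega(\ell)$ isolating buckets exist with constant probability when $D\le d<2D$.
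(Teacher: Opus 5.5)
There is a gap in your step 2 at the bottom scale, and it cannot be closed in the full range $\ell\le n/16$. Your step 3 is fine.

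\textbf{The gap.} Your fixed-scale sketch only guarantees $\Omega(\ell)$ isolating buckets, but success needs at least $\ell$ certified defectives. For $d\ge 2\ell$ this can be repaired by tuning constants. For $d\in[\ell,2\ell)$, and in particular $d=\ell$, success means outputting $I$ itself. So \emph{every} defective must be sampled and must land alone in some bucket. Hashing $\ell$ defectives into $\Theta(\ell)$ buckets gives $\Theta(\ell)$ collisions in expectation, so this happens only with probability $e^{-\Omega(\ell)}$. The claim that copy $A_0$ succeeds with constant probability is therefore false for your construction.

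\textbf{Your step 3.} The certification is sound: demanding exactly one positive answer in each complementary pair forces exactly one defective in the bucket. It is also a genuinely different, estimator-free way to select the output. The paper instead runs the constant-factor estimator in parallel and uses $\widehat D$ to decide which copy to trust.

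\textbf{Why the gap cannot be closed in general.} At $d=\ell$ the task is non-adaptive \emph{exact} recovery of an $\ell$-set. Known lower bounds for that problem say individual testing is asymptotically optimal once $\ell=\omega(n/\log n)$, for any constant success probability. This gives $\Omega(n)$ tests for, e.g., $\ell=n/\sqrt{\log n}$, whereas $C_0\ell\log^2(n/\ell)=o(n)$ there, so the statement as written fails in that regime. The paper's proof does not avoid this. It applies the quoted known-$D$ bound $O(\ell\log(n/d))$ for every $d/4\le D\le 4d$, and at $d=\ell$ that is exactly such an impossible exact-recovery bound.

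\textbf{What your architecture does prove.} You get the theorem for $\ell\le n^{c}$ with fixed $c<1$:
\begin{enumerate}
\item Keep the certified-union output.
\item At the bottom scale, run $\Theta(1+\log\ell)$ independent repetitions of the certified bucketing with sampling rate $1$ and $4\ell$ buckets. Each defective is isolated in a single repetition with probability at least $1/2$. So it goes unisolated in all repetitions with probability at most $1/(10\ell)$, and all of $I$ is certified with probability at least $4/5$ when $d<2\ell$.
\item The cost is $O(\ell(1+\log\ell)\log(n/\ell))=O_c(\ell\log^2(n/\ell))$, because $\log\ell\le\frac{c}{1-c}\log(n/\ell)$ in this range.
\end{enumerate}
Also use a random equipartition of the sample rather than independent hashing. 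Then bucket sizes, and hence the number of bit tests, are deterministic rather than only bounded in expectation.
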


\begin{corollary}
For every $1\le \ell\le n/16$, the randomized non-adaptive test complexity of
unknown-$d$ partial detection is
\[
  \Theta\!\left(\ell\log^2(n/\ell)\right)
\]
for a constant success probability. 

In particular, for every fixed $c<1$, if
$\ell\le n^c$, then the complexity is
\[
  \Theta_c(\ell\log^2 n).
\]
\end{corollary}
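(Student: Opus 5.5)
The corollary follows by combining Theorem~\ref{thm:main} and Theorem~\ref{thm:upper}; the only work is to fix the meaning of ``constant success probability'' and then compare $\log(n/\ell)$ with $\log n$.

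\textbf{Fixing the success probability.} I read ``constant success probability'' as success probability at least $2/3$, the threshold in Theorem~\ref{thm:main}. For the upper side I will rely on the fact that the constant in Theorem~\ref{thm:upper} can be pushed to $2/3$ by enlarging $C_0$. Concretely, I would invoke the known-$d$ algorithm of \cite{BshoutyHaddadZaknoon2023} with failure probability a sufficiently small constant at each dyadic scale. This changes only the constant factor in its $O(\ell\log(n/d))$ cost.

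I cannot get this amplification for free by repeating the whole unknown-$d$ algorithm, because a non-adaptive algorithm cannot check which copy's output is correct. So the boost must come from the fixed-scale subroutine. If one instead wants the statement for an arbitrary constant $p>0$, the lower-bound argument of Theorem~\ref{thm:main} has to be rerun with $p$ in place of $2/3$. Its fixed-scale entropy requirement is of order $p\,\ell\log(n/d)$ minus $O(1)$, so $c_0$ would then depend on $p$.

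\textbf{The $\Theta$ statement.} For $1\le\ell\le n/16$, Theorem~\ref{thm:main} gives $q\ge c_0\,\ell\log^2(n/\ell)$ for every algorithm. Theorem~\ref{thm:upper} exhibits an algorithm with $q\le C_0\,\ell\log^2(n/\ell)$. Together these give $\Theta(\ell\log^2(n/\ell))$.

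\textbf{The regime $\ell\le n^c$.} Since $\ell\ge 1$, we have $\log(n/\ell)\le\log n$. Since $\ell\le n^c$, we have $\log(n/\ell)\ge(1-c)\log n$. Hence
\[
(1-c)^2\,\ell\log^2 n\;\le\;\ell\log^2(n/\ell)\;\le\;\ell\log^2 n ,
\]
which yields $\Theta_c(\ell\log^2 n)$, with the hidden constant degrading like $(1-c)^2$.

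One technicality remains: the hypothesis $\ell\le n/16$ must hold. It follows from $\ell\le n^c$ once $n\ge 16^{1/(1-c)}$. For smaller $n$, which is a constant depending only on $c$, every quantity involved is $\Theta_c(1)$ and the claim is trivial.

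\textbf{Main obstacle.} The one delicate point is making sure the upper and lower bounds refer to the same success probability. Everything else is direct substitution.
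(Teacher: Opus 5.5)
Your proposal is correct and follows the paper's own argument: you combine Theorem~\ref{thm:main} with Theorem~\ref{thm:upper}, and then use the elementary comparison $(1-c)\log n\le\log(n/\ell)\le\log n$ for $1\le\ell\le n^c$. Your care about matching success probabilities is already covered by Theorem~\ref{thm:upper2}, which boosts both the estimator and the fixed-scale subroutines to success $1-\delta$; taking $\delta=1/3$ aligns the upper bound with the $2/3$ threshold of Theorem~\ref{thm:main} at only a constant-factor cost.
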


\section{Lower Bound}

\subsection{Entropy}
For a discrete random variable $X$ taking values in a finite set $\mathcal X$,
its entropy is defined by
\[
  H(X)=\sum_{x\in\mathcal X}\Pr[X=x]\log\frac{1}{\Pr[X=x]},
\]
where terms with $\Pr[X=x]=0$ are omitted. Equivalently,
\[
  H(X)=\mathbb E\!\left[\log\frac{1}{\Pr[X]}\right].
\]
Entropy measures the average number of bits needed to describe the outcome
of $X$.

For two discrete random variables $X$ and $Y$, the conditional entropy of $X$
given $Y$ is
\[
  H(X\mid Y)
  =
  \sum_y \Pr[Y=y]\,H(X\mid Y=y).
\]
Equivalently,
\[
  H(X\mid Y)
  =
  \sum_y \Pr[Y=y]
  \sum_x \Pr[X=x\mid Y=y]\log\frac{1}{\Pr[X=x\mid Y=y]},
\]
where terms with probability $0$ are omitted. The conditional entropy $H(X\mid Y)$ is the average number of additional bits
needed to describe $X$ after the value of $Y$ is known.

We use the following standard facts about entropy; see, for example,
Cover and Thomas~\cite{CoverThomas2006}.
\begin{enumerate}[label=(\roman*)]
  \item\label{MEB} {\bf Maximum entropy bound}: If $X$ takes at most $M$ possible values,
then
\[
  H(X)\le \log M.
\]
Equality holds when $X$ is uniformly distributed over $M$ values.
  \item {\bf Deterministic dependence}: if $X$ is determined by $Y$, that is,
$X=f(Y)$ for some function $f$, then
\[
  H(X\mid Y)=0.
\]

\item {\bf Conditioning reduces entropy}:
  $H(X\mid Y)\le H(X)$.
\item \label{IXZY} {\bf Chain rule for entropy}: $H(X|Y)-H(X|Y,Z)=H(Z|Y)-H(Z|X,Y)$.
  \item\label{SOE} {\bf Subadditivity of entropy}: If $X=(X_1,\dots,X_q)$, then
  \[
    H(X)\le H(X_1)+\cdots+H(X_q).
  \]
  \item If $B$ is a $0$-$1$ random variable with $\Pr[B=1]=p$, then
  \[
    H(B)=h(p):=p\log(1/p)+(1-p)\log(1/(1-p)).
  \]
\end{enumerate}
We also use the standard estimates
\begin{equation}\label{eq:binary-entropy-bound}
  h(p)\le C p\log(1/p)\qquad(0<p\le 1/2)
\end{equation}
for an absolute constant $C$, and the symmetric bound $h(p)=h(1-p)$. 
Indeed, for \(0<p\le 1/2\),
\[
  h(p)=p\log\frac1p+(1-p)\log\frac1{1-p}
  \le p\log\frac1p+2p
  \le C p\log\frac1p,
\]
for an absolute constant \(C\), since \(\log(1/p)\ge \log 2\).

\subsection{One Fixed Scale Requires Many Bits}

First fix the number of defectives $d$. Let $I$ be uniformly random among all $d$-subsets of $[n]$.

We call
  $Z=(S,Y)$
the {\it transcript} of the algorithm. It contains the random seed $S$ and all test
answers $Y$. The output of the algorithm is then a function of the transcript,
 $ L=L(Z)$.

The next lemma says that if an algorithm can output $\ell$ elements of $I$ with constant probability, then its transcript must contain $\Omega(\ell\log(n/d))$ bits about $I$. 
Equivalently, $H(I)$ is the average number of bits needed to describe $I$
before seeing the transcript, $H(I\mid Z)$ is the average number of bits still
needed after seeing $Z$, and $H(I)-H(I\mid Z)=\Omega(\ell\log(n/d))$ is the number of bits about $I$
revealed by $Z$.

\begin{lemma}[Information needed at one scale]\label{lem:one-scale}
Let $\ell\le d\le n/4$. Let $I$ be uniformly random among all $d$-subsets of $[n]$. Let $Z$ be any transcript, and suppose that from $Z$ one can compute an $\ell$-set $L=L(Z)$ such that
\[
  \Pr[L\subseteq I]\ge 2/3.
\]
Then
\[
  H(I)-H(I\mid Z)=\Omega(\ell\log(n/d)).
\]
In words, the transcript $Z$ reveals $\Omega(\ell\log(n/d))$ bits of information about $I$.
\end{lemma}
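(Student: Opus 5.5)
We bound the remaining entropy $H(I\mid Z)$ from above by conditioning on the event that the output is correct. Let $E=\1[L(Z)\subseteq I]$, a $0$-$1$ random variable with $\Pr[E=1]\ge 2/3$. Since $E$ is a function of $(I,Z)$, the chain rule \ref{IXZY} gives
\[
  H(I\mid Z)\le H(I\mid Z,E)+H(E)\le H(I\mid Z,E)+1 .
\]
It then suffices to bound $H(I\mid Z,E)$ by splitting according to the value of $E$:
\[
  H(I\mid Z,E)=\Pr[E=1]\,H(I\mid Z,E=1)+\Pr[E=0]\,H(I\mid Z,E=0).
\]

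For the term with $E=1$ we use counting. Once $Z=z$ and $E=1$ are fixed, the set $L(z)$ is a known $\ell$-set contained in $I$. Hence $I$ is determined by a $(d-\ell)$-subset of $[n]\setminus L(z)$, which can take at most $\binom{n-\ell}{d-\ell}$ values. The maximum entropy bound \ref{MEB} then gives
\[
  H(I\mid Z=z,E=1)\le\log\binom{n-\ell}{d-\ell}.
\]
For the term with $E=0$ we use the trivial bound $\log\binom nd$. Since $H(I)=\log\binom nd$, combining these gives
\[
  H(I)-H(I\mid Z)\ \ge\ \Pr[E=1]\Bigl(\log\tbinom nd-\log\tbinom{n-\ell}{d-\ell}\Bigr)-1
  \ \ge\ \tfrac23\log\frac{\binom nd}{\binom{n-\ell}{d-\ell}}-1 .
\]

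The remaining step is the binomial estimate:
\[
  \frac{\binom nd}{\binom{n-\ell}{d-\ell}}=\prod_{i=0}^{\ell-1}\frac{n-i}{d-i}\ \ge\ \Bigl(\frac nd\Bigr)^{\ell},
\]
since $(n-i)/(d-i)\ge n/d$ for $i<d\le n$. The right-hand side of the previous display is therefore at least $\tfrac23\,\ell\log(n/d)-1$.

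The main obstacle is the additive $-1$ coming from $H(E)$, which must be absorbed into the main term. Here the assumption $d\le n/4$ is used: it gives $\log(n/d)\ge 2$, so $\ell\log(n/d)\ge 2$. Consequently $\tfrac23\ell\log(n/d)-1\ge\tfrac16\ell\log(n/d)$. This yields $H(I)-H(I\mid Z)\ge\tfrac16\,\ell\log(n/d)=\Omega(\ell\log(n/d))$, with logarithms taken base $2$.
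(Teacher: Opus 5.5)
Your proof is correct and follows essentially the same route as the paper. Both pay one bit for the success indicator, count $\binom{n-\ell}{d-\ell}$ completions on success, and use the trivial $\log\binom nd$ bound on failure. The only differences are cosmetic: you condition on $Z$ directly rather than on $L$ followed by $H(I\mid Z)\le H(I\mid L)$. You also make explicit two steps the paper leaves vague, namely the exact estimate $\prod_{i<\ell}(n-i)/(d-i)\ge (n/d)^\ell$ and the absorption of the additive $-1$ via $\log(n/d)\ge 2$.
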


\begin{proof}
Let $G$ be the event that the output is correct:
  $G=\{L\subseteq I\}$.
By assumption, $\Pr[G]\ge2/3$.

Before seeing any transcript, the defective set $I$ is uniform over $\binom nd$ possibilities, so, by \ref{MEB},
\[
  H(I)=\log\binom nd.
\]
Now imagine that we are only told the output set $L$. If the event $G$ occurs, then $I$ must contain all elements of $L$. Once those $\ell$ elements are fixed, the remaining $d-\ell$ defective items can be chosen in at most
\[
  \binom{n-\ell}{d-\ell}
\]
ways. If $G$ does not occur, we use the trivial bound that $I$ has at most $\binom nd$ possibilities. Also, revealing whether $G$ occurred costs at most one extra bit. Therefore
\[
  H(I\mid L)
  \le 1+\Pr[G]\log\binom{n-\ell}{d-\ell}+(1-\Pr[G])\log\binom nd.
\]
Thus
\begin{align*}
  H(I)-H(I\mid L)
  &\ge \Pr[G]\left(\log\binom nd-\log\binom{n-\ell}{d-\ell}\right)-1 \\
  &= \Pr[G]\log\left(\frac{\binom nd}{\binom{n-\ell}{d-\ell}}\right)-1 \\
  &= \Pr[G]\log\left(\frac{\binom n\ell}{\binom d\ell}\right)-1.
\end{align*}
Since $d\le n/4$ and $\ell\le d$, the ratio $\binom n\ell/\binom d\ell$ is at least a constant multiple of $(n/d)^\ell$. Hence
\[
  H(I)-H(I\mid L)=\Omega(\ell\log(n/d)).
\]
Finally, $L$ is computed from $Z$, so knowing $Z$ can only reduce the uncertainty about $I$ further:
\[
  H(I\mid Z)\le H(I\mid L).
\]
Therefore
\[
  H(I)-H(I\mid Z)\ge H(I)-H(I\mid L)=\Omega(\ell\log(n/d)).
\]
This proves the lemma.
\end{proof}

\subsection{One Test Is Useful at Only Constantly Many Scales}

Now we let $d$ vary over many possible values. Define
\[
  d_j=4\ell\cdot 2^j,
  \qquad j=0,1,\dots,m,
\]
where $m$ is the largest integer with $d_m\le n/4$. Since $d_j$ doubles at every step,
\[
  m=\lfloor\log(n/\ell)\rfloor-4=\Theta(\log(n/\ell)).
\]
For each $j$, let $I_j$ be uniformly random among all $d_j$-subsets of $[n]$.

Fix a single test $Q\subseteq[n]$, and let $k=|Q|$. At scale $j$, define
\[
  B_j(Q)=T_{I_j}(Q)=\1[Q\cap I_j\ne\varnothing].
\]
The following lemma makes precise the idea that a single test has only $O(1)$ total usefulness across all scales.

\begin{lemma}[Scale locality for one test]\label{lem:scale-locality}
For every fixed test $Q\subseteq[n]$,
\[
  \sum_{j=0}^m H(B_j(Q))=O(1).
\]
\end{lemma}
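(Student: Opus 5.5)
Writing $p_j=\Pr[B_j(Q)=0]$, we have
\[
  p_j=\binom{n-k}{d_j}\Big/\binom{n}{d_j}=\prod_{i=0}^{d_j-1}\frac{n-k-i}{n-i},
\]
so $H(B_j(Q))=h(p_j)$. The plan is to bound $p_j$ from both sides by exponentials in $kd_j/n$, and then split the scales into three groups according to the size of $x_j:=kd_j/n$. Since $d_j=4\ell 2^j$, the $x_j$ double with $j$. If $k=0$ every $B_j$ is constant and the sum is $0$, so assume $k\ge1$.

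\emph{Two-sided bounds on $p_j$.} For the upper bound, each factor $(n-k-i)/(n-i)=1-k/(n-i)\le 1-k/n$, so $p_j\le(1-k/n)^{d_j}\le e^{-x_j}$. For the lower bound, consider first the case $k\le n/2$. Then each factor is at least $1-k/(n-d_j)\ge 1-4k/(3n)$, using $d_j\le n/4$. Since $1-y\ge e^{-2y}$ for $y\le 2/3$, this gives $p_j\ge e^{-(8/3)x_j}$. Hence $1-p_j\le (8/3)x_j$. The case $k>n/2$ is separate: then $Q\cap I_j=\varnothing$ requires $I_j\subseteq[n]\setminus Q$, and since $|[n]\setminus Q|<n/2$ we get $p_j\le 2^{-d_j}\le 2^{-4\ell 2^j}$.

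\emph{Summing over the three groups.}
\begin{itemize}
\item Scales with $x_j\le 1/8$, say. Here $1-p_j\le(8/3)x_j\le 1/3$, so by \eqref{eq:binary-entropy-bound} and symmetry $h(p_j)\le C\,u_j\log(1/u_j)$ with $u_j=1-p_j\le 3x_j$ (using monotonicity of $u\log(1/u)$ for small $u$). The $x_j$ form a geometric sequence with ratio $2$ and largest term at most $1/8$, so $\sum_j x_j\log(1/x_j)=O(1)$: the terms decay geometrically going down.
\item Scales with $1/8<x_j<8$. There are at most $O(1)$ such $j$, each with $h\le1$.
\item Scales with $x_j\ge 8$. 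Here $p_j\le e^{-x_j}\le 1/2$, so $h(p_j)\le C e^{-x_j}x_j\log e$. Summing over a doubling sequence starting at $8$ gives $O(1)$.
\end{itemize}
In the case $k>n/2$ we instead use $p_j\le 2^{-4\ell2^j}$ directly; this makes $\sum_j h(p_j)=O(1)$ by the same superexponential decay.

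\emph{Main obstacle.} The lower bound on $p_j$, equivalently the upper bound on $1-p_j$, in the small-$x_j$ regime needs some care, since the hypergeometric product has to be compared with an exponential without losing more than a constant factor in the exponent. This is exactly where $d_j\le n/4$ and the case split on $k$ are used. I would first try the crude union bound $1-p_j=\Pr[Q\cap I_j\ne\varnothing]\le k d_j/n=x_j$. This avoids the product estimate entirely and may make the case split on $k$ unnecessary. The remainder is routine geometric summation.
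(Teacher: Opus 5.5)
Your proof is correct and follows the paper's argument: the same split of scales by the size of $x_j=kd_j/n$, the same bound $\Pr[B_j(Q)=0]\le e^{-x_j}$ at large scales, $O(1)$ bits for the constantly many middle scales, and geometric summation at both ends. The one difference is in the small-scale regime, where the paper bounds $\Pr[B_j(Q)=1]\le x_j$ by Markov's inequality, which is exactly the union bound you suggest at the end. That makes your product lower bound unnecessary, and the separate case $k>n/2$ is also unnecessary, since $k>n/2$ already forces $x_j>2$ for every $j$ and the large-scale bound covers it.
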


\begin{proof}
Let
  $x_j={k d_j}/{n}$.
This is the expected number of defectives in $Q$ when there are $d_j$ defectives. Since $d_{j+1}=2d_j$, we have
  $x_{j+1}=2x_j$.
So the sequence $x_j$ doubles as $j$ increases.

Let
\[
  p_j=\Pr[B_j(Q)=1]=\Pr[Q\cap I_j\ne\varnothing].
\]
We split into three cases.

\medskip
\noindent\textbf{Small scales: indices $j$ with $x_j\le 1/2$.}
By Markov's inequality,
\[
  p_j=\Pr[|Q\cap I_j|\ge 1]\le \E\left[|Q\cap I_j|\right]=x_j.
\]
Hence, using \eqref{eq:binary-entropy-bound},
\[
  H(B_j(Q))=h(p_j)\le Cx_j\log(1/x_j).
\]
Since the positive numbers $x_j$ double from one scale to the next, the sum of $x_j\log(1/x_j)$ over all $j$ with $x_j\le1/2$ is bounded by an absolute constant.

\medskip
\noindent\textbf{Middle scales: indices $j$ with $1/2<x_j<2$.}
There are at most three such indices $j$, because $x_j$ doubles each time. Each contributes at most one bit, so the total contribution is $O(1)$.

\medskip
\noindent\textbf{Large scales: indices $j$ with $x_j\ge 2$.}
Let
\[
  r_j=\Pr[B_j(Q)=0]=\Pr[Q\cap I_j=\varnothing].
\]
Then
\[
  r_j=\frac{\binom{n-k}{d_j}}{\binom n{d_j}}
  \le \left(1-\frac{k}{n}\right)^{d_j}
  \le e^{-kd_j/n}=e^{-x_j}.
\]
Since $H(B_j(Q))=h(r_j)$ and $r_j\le e^{-x_j}\le e^{-2}<1/2$, again using \eqref{eq:binary-entropy-bound},
\[
  H(B_j(Q))\le Cx_j e^{-x_j}.
\]
The sum of $x_j e^{-x_j}$ over a sequence that doubles at every step and starts at least $2$ is bounded by an absolute constant.

Adding the three ranges gives
\[
  \sum_{j=0}^m H(B_j(Q))=O(1).
\]
\end{proof}

\subsection{Proof of the Main Theorem}

We now prove Theorem~\ref{thm:main}.

Assume the algorithm uses $q$ tests. For each seed $s$, the tests are fixed sets
  $Q_1(s),\dots,Q_q(s)$.
For scale $j$, let
  $Y_j=\bigl(T_{I_j}(Q_1(S)),\dots,T_{I_j}(Q_q(S))\bigr)$
be the answer vector. The algorithm sees both $S$ and $Y_j$, and then outputs
\[
  L_j=L(S,Y_j).
\]
Since the algorithm succeeds with probability at least $2/3$ on every defective set, it also succeeds with probability at least $2/3$ when $I_j$ is uniformly random among all $d_j$-sets. Therefore Lemma~\ref{lem:one-scale}, applied with transcript $Z=(S,Y_j)$, gives
\[
  H(I_j)-H(I_j\mid S,Y_j)=\Omega(\ell\log(n/d_j)).
\]
Since the random seed $S$ is independent of $I_j$, we have $H(I_j)=H(I_j\mid S)$.
By the chain rule for entropy (\ref{IXZY}),
\[
  H(I_j\mid S)-H(I_j\mid S,Y_j)=H(Y_j\mid S)-H(Y_j\mid I_j,S)\le H(Y_j\mid S).
\]
Therefore
  $H(I_j)-H(I_j\mid S,Y_j)\le H(Y_j\mid S)$.
Hence, for every $j$,
  $H(Y_j\mid S)=\Omega(\ell\log(n/d_j))$.
Summing over all scales,
\begin{equation}\label{eq:lower-total-entropy}
  \sum_{j=0}^m H(Y_j\mid S)
  \ge \Omega\!\left(\ell\sum_{j=0}^m\log(n/d_j)\right).
\end{equation}

We next upper-bound the same left-hand side. Fix the seed $S=s$. By subadditivity of entropy~\ref{SOE},
\[
  H(Y_j\mid S=s)
  \le \sum_{i=1}^q H\bigl(T_{I_j}(Q_i(s))\bigr).
\]
Therefore
\begin{align*}
  \sum_{j=0}^m H(Y_j\mid S=s)
  &\le \sum_{j=0}^m\sum_{i=1}^q H\bigl(T_{I_j}(Q_i(s))\bigr) \\
  &= \sum_{i=1}^q\sum_{j=0}^m H\bigl(T_{I_j}(Q_i(s))\bigr).
\end{align*}
For each fixed test $Q_i(s)$, Lemma~\ref{lem:scale-locality} says that the inner sum over $j$ is $O(1)$. Thus
\[
  \sum_{j=0}^m H(Y_j\mid S=s)=O(q).
\]
Averaging over $S$ gives
\begin{equation}\label{eq:upper-total-entropy}
  \sum_{j=0}^m H(Y_j\mid S)=O(q).
\end{equation}
Combining \eqref{eq:lower-total-entropy} and \eqref{eq:upper-total-entropy},
\[
  q\ge \Omega\!\left(\ell\sum_{j=0}^m\log(n/d_j)\right).
\]
Recall that
  $d_j=4\ell\cdot 2^j$.
Therefore
  $\log(n/d_j)=\log({n}/{4\ell})-j$.
By the definition of $m$, we have $m=\lfloor\log(n/\ell)\rfloor-4$. Hence
\[
  \sum_{j=0}^{m}\log(n/d_j)
  =
  \sum_{j=0}^{m}(\log({n}/{4\ell})-j)
  \ge
  \sum_{j=0}^{\lfloor m/2\rfloor}(\log({n}/{4\ell})-j)
  =
  \Omega(\log^2({n}/{4\ell})).
\]
Therefore
\[
  q
  \ge
  \Omega\!\left(
    \ell\sum_{j=0}^{m}\log(n/d_j)
  \right)
  =
  \Omega\!\left(\ell\log^2(n/\ell)\right).
\]
This proves Theorem~\ref{thm:main}.

\section{A Matching Upper Bound}\label{sec:upper}

We now prove the matching upper bound. We use two ingredients from \cite{BshoutyHaddadZaknoon2023}: Theorem 11,
which gives the randomized non-adaptive upper bound when a constant-factor
estimate of $d$ is known, and Lemma~9, which gives a randomized non-adaptive
constant-factor estimator for $d$. We briefly recall the black-box tools from~\cite{BshoutyHaddadZaknoon2023}. The
known-\(D\) algorithm repeatedly isolates a single defective by random sampling
at the appropriate scale \(1/D\), while the estimator probes geometrically many
sampling rates and identifies a constant-factor estimate of \(d\) from the
transition between mostly negative and mostly positive answers.

First, by Lemma~9 in \cite{BshoutyHaddadZaknoon2023}, if a number $D$ is known such that
$d/4\le D\le 4d$, then there is a randomized non-adaptive algorithm that
detects $\ell$ defective items using
  $O\!\left(\ell\log({n}/{d})\right)$
tests for constant success probability. Since $D$ and $d$ differ by at most a
constant factor, this may also be written as
\[
  O\!\left(\ell\log\frac{n}{D}\right).
\]
Second, by Lemma~9 of \cite{BshoutyHaddadZaknoon2023}, there is a
randomized non-adaptive estimator which, using $O(\log(1/\delta)\log n)$
tests, returns a value $\widehat D$ satisfying
  $d/2<\widehat D<2d$
with probability at least $1-\delta$.

The algorithm runs the estimator and, in parallel, runs the known-$D$ algorithm
for all dyadic guesses
\[
  D_i=2^i\ell,
  \qquad
  i=0,1,\dots,\lceil\log(n/\ell)\rceil .
\]
After all test answers are received, the algorithm looks at the estimate
$\widehat D$ and chooses an index $i$ such that $D_i$ is within a factor $2$ of
$\widehat D$. If the estimator is correct, then this $D_i$ is within a constant
factor of the true value of $d$, and therefore the corresponding known-$D_i$
algorithm succeeds with constant probability. Thus the algorithm does not need
to verify which of the parallel outputs is correct; it uses the estimator to
select the appropriate output.

The total number of tests is the sum of the costs over all guesses:
\[
  \sum_{i=0}^{\lceil\log(n/\ell)\rceil}
  O\!\left(\ell\log\frac{n}{2^i\ell}\right)
  =
  O\!\left(\ell\log^2(n/\ell)\right).
\]
The estimator contributes only $O(\log n)$ additional tests for constant success
probability, which is dominated by the above bound. Hence the total number of
tests is
$O\!\left(\ell\log^2(n/\ell)\right)$.

\begin{theorem}[Matching upper bound]\label{thm:upper2}
Let $1\le \ell\le n/16$ and let $0<\delta<1$. There is a randomized
non-adaptive group testing algorithm which, without knowing $d=|I|$, outputs
$\ell$ defective items with probability at least $1-\delta$ for every defective
set $I\subseteq[n]$ satisfying $|I|\ge \ell$, using
\[
  O\!\left((\ell+\log(1/\delta))\log^2(n/\ell)
  +\log(1/\delta)\log n\right)
\]
tests.

In particular, this yields Theorem~\ref{thm:upper}. For a constant success probability, the required number of tests is
$O\!\left(\ell\log^2(n/\ell)\right)$.
\end{theorem}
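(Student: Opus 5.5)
We combine the two black-box tools from \cite{BshoutyHaddadZaknoon2023} recalled above, with each failure probability tuned to $\delta$. The algorithm first fixes the dyadic guesses $D_i=2^i\ell$ for $i=0,1,\dots,\lceil\log(n/\ell)\rceil$. For each guess it runs the known-$D$ algorithm amplified to failure probability $\delta/2$. In parallel it runs the estimator with failure probability $\delta/2$. All tests are fixed before any answer arrives, so the combined procedure is non-adaptive. After the answers come back, it reads $\widehat D$, picks an $i$ with $D_i$ within a factor $2$ of $\widehat D$, and outputs the set returned by copy $i$. If $d\ge\ell$ and $d/2<\widehat D<2d$, then $d/4\le D_i\le 4d$. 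This requires a valid index $i$ to exist; since $d\le n$, we have $\widehat D<2n$, so extending the range of $i$ by one or two indices, or clipping $\widehat D$ to $[\ell,n]$, handles the boundary cases.

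The main step is amplifying the known-$D$ algorithm of \cite{BshoutyHaddadZaknoon2023} to failure probability $\delta/2$ using $O((\ell+\log(1/\delta))\log(n/D))$ tests. Black-box repetition would cost $O(\ell\log(1/\delta)\log(n/D))$, which is too many, so I will open up its structure. It samples random sets at rate about $1/D$. Each sample isolates a single defective with constant probability, and $O(\log(n/D))$ extra tests per isolating sample identify that defective. I would first try using $O(\ell+\log(1/\delta))$ independent isolation attempts. A Chernoff bound shows that at least $2\ell$ of them succeed except with probability $\delta/4$. The identification step must be verifiable, or at least collision-robust, so that the $\ell$ distinct defectives are correct with failure probability $\delta/4$. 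I expect this accounting to be the main obstacle: distinctness of the found defectives, and controlling false identifications without paying an extra $\log(1/\delta)$ factor on the $\ell$ term. If this cannot be done, I would accept a weaker bound of the form $O(\ell\log(1/\delta)\log^2(n/\ell))$. For constant $\delta$ that weaker bound still gives Theorem~\ref{thm:upper}.

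The costs then sum as follows. The estimator uses $O(\log(1/\delta)\log n)$ tests. The copies contribute
\[
\sum_{i}O\!\left((\ell+\log(1/\delta))\log\frac{n}{2^i\ell}\right)=O\!\left((\ell+\log(1/\delta))\log^2(n/\ell)\right),
\]
because $\sum_{i\le \log(n/\ell)}(\log(n/\ell)-i)=O(\log^2(n/\ell))$. Here the hypothesis $\ell\le n/16$ gives $\log(n/\ell)\ge 4$, so additive constants are absorbed. A union bound over the estimator's failure and the failure of the single selected copy gives total failure probability at most $\delta$. Only copy $i$ is used, so no union bound over all copies is needed.

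Finally, setting $\delta=1/3$ gives $O(\ell\log^2(n/\ell)+\log n)$. Since $\ell\le n/16$, we have $\log n\le\log(n/\ell)+\log\ell$. The term $\log(n/\ell)$ is at most $\ell\log^2(n/\ell)$. For the term $\log\ell$: if $\ell\ge\log n$ it is trivially at most $\ell\log^2(n/\ell)$; otherwise $\log(n/\ell)=\Omega(\log n)$ and again $\log n\le\ell\log^2(n/\ell)$. So the $\log n$ term is dominated, and this yields Theorem~\ref{thm:upper}.
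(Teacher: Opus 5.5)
Your architecture is the same as the paper's. You run the estimator and one copy of the known-$D$ algorithm per dyadic guess $D_i=2^i\ell$ in parallel. You use $\widehat D$ only after all answers arrive, to select a single copy. You take a union bound over the estimator and that copy, and you use $\sum_i\log(n/2^i\ell)=O(\log^2(n/\ell))$. Those pieces are fine, and your check that $\log n=O(\ell\log^2(n/\ell))$ for $\ell\le n/16$ is more explicit than the paper's.

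The gap is the step you flag yourself as the main one. You never obtain a known-$D$ algorithm with failure probability $\delta/2$ using $O((\ell+\log(1/\delta))\log(n/D))$ tests. Without it, the claimed bound $O((\ell+\log(1/\delta))\log^2(n/\ell)+\log(1/\delta)\log n)$ is not established. Your fallback $O(\ell\log(1/\delta)\log^2(n/\ell))$ is a weaker bound. As written, the proposal proves only the constant-probability consequence, Theorem~\ref{thm:upper}.

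The paper needs no new argument for this step. It invokes Theorem~11 of \cite{BshoutyHaddadZaknoon2023}, which is already stated with failure probability $\delta$ and cost $O((\ell+\log(1/\delta))\log(n/d))$ whenever $d/4\le D\le 4d$. Dropping that black box into your construction gives exactly the stated bound.

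The route you sketch for proving it yourself would not close the gap as it stands.
\begin{itemize}
\item Take $d$ within a constant factor of $\ell$, for instance $d=\ell$, which is allowed because only $|I|\ge\ell$ is assumed. The isolated defectives are then i.i.d.\ uniform on $I$. For constant $\delta$, $c\ell$ successful isolations give only about $(1-e^{-c})\ell$ distinct defectives in expectation. Collecting $\ell$ distinct ones needs $\Theta(\ell\log\ell)$ isolations by coupon collecting. So distinctness is a real obstruction in that regime, not just bookkeeping.
\item Plain repetition is not an automatic amplifier for a non-adaptive partial-detection algorithm. You cannot check which copy's output lies in $I$ without tests that would have had to be fixed in advance. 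Even your fallback therefore needs more justification than ``black-box repetition.''
\end{itemize}
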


\begin{proof}
We use two ingredients from \cite{BshoutyHaddadZaknoon2023}. The first is
Theorem 11 there: if a number $D$ is known in advance and
  $d/4\le D\le 4d$,
then there is a randomized non-adaptive algorithm that detects $\ell$ defective
items with probability at least $1-\delta$ using
$O\!\left((\ell+\log(1/\delta))\log(n/d)\right)$
tests. Since $D$ and $d$ differ by at most a constant factor, this can be written
as
$O\!\left((\ell+\log(1/\delta))\log(2n/D)\right)$.

The second ingredient is Lemma 9 of \cite{BshoutyHaddadZaknoon2023}: there is a randomized non-adaptive
estimator which, using $O(\log(1/\delta)\log n)$ tests, returns a value
$\widehat D$ satisfying
  $d/2<\widehat D<2d$
with probability at least $1-\delta$.

We now construct the unknown-$d$ algorithm. Let
\[
  M=\lceil\log(n/\ell)\rceil,
  \qquad
  D_i=\min\{2^i\ell,n\},
  \qquad
  i=0,1,\dots,M.
\]
Since the algorithm is non-adaptive, we cannot first run the estimator, wait for
$\widehat D$, and then choose the tests for the known-$D$ algorithm. Therefore,
before seeing any answers, we run in parallel:
\begin{enumerate}
  \item the estimator, with failure probability at most $\delta/2$;
  \item for every $i=0,1,\dots,M$, the known-$D_i$ algorithm, also with failure
  probability at most $\delta/2$.
\end{enumerate}
All tests are chosen before any answers are observed, so the whole algorithm is
non-adaptive.

After all answers are received, compute the estimate $\widehat D$. Choose an
index $i$ such that $D_i$ is within a factor $2$ of $\widehat D$, and output the
set produced by the corresponding known-$D_i$ algorithm.

If the estimator succeeds, then
  $d/2<\widehat D<2d$.
Since $D_i$ is chosen within a factor $2$ of $\widehat D$, we get
  $d/4<D_i<4d$.
Thus the selected known-$D_i$ algorithm has a valid constant-factor estimate of
$d$, and therefore it succeeds with probability at least $1-\delta/2$. By the
union bound, the estimator succeeds and the selected known-$D_i$ algorithm
succeeds with probability at least $1-\delta$.

It remains to count the tests. The estimator uses
  $O(\log(1/\delta)\log n)$
tests. The known-$D_i$ algorithms use altogether
\begin{align*}
  \sum_{i=0}^{M}
  O\!\left((\ell+\log(1/\delta))\log(2n/D_i)\right)
  &=
  O\!\left((\ell+\log(1/\delta))
  \sum_{i=0}^{M}\log\frac{2n}{2^i\ell}\right)\\
  &=
  O\!\left((\ell+\log(1/\delta))\log^2(n/\ell)\right).
\end{align*}
Therefore the total number of tests is
$O\!\left((\ell+\log(1/\delta))\log^2(n/\ell)
  +\log(1/\delta)\log n\right)$.
For constant $\delta$, this becomes
$O\!\left(\ell\log^2(n/\ell)\right)$.
\end{proof}

Combining Theorems~\ref{thm:main} and~\ref{thm:upper}, the randomized non-adaptive complexity of unknown-$d$ partial detection, for constant success probability, is
$\Theta\!\left(\ell\log^2(n/\ell)\right)$.

\section{Conclusion}

We determined the randomized non-adaptive test complexity of detecting
$\ell$ defective items when the total number $d$ of defectives is unknown.
The main result is the tight bound
$\Theta\!\left(\ell\log^2(n/\ell)\right)$
for constant success probability.

The lower bound follows from a scale-locality argument. At a fixed value of
$d$, any successful algorithm must reveal enough information to identify
$\ell$ defective items, which costs about $\ell\log(n/d)$ bits. However, a
single group test is useful only for a small range of possible values of $d$:
if the test is too small it is almost always negative, and if it is too large
it is almost always positive. Thus each test contributes only $O(1)$ useful
bits over all logarithmic scales of $d$. Summing the information required at
all scales gives the lower bound
$\Omega\!\left(\ell\log^2(n/\ell)\right)$.

The matching upper bound is obtained by running, in parallel, the known-$d$
randomized non-adaptive algorithm for all dyadic guesses of $d$, together with
a non-adaptive estimator for $d$ that is used only after all answers are known
to select the appropriate copy. This keeps the algorithm non-adaptive and gives
the same order of tests as the lower bound.

This closes the gap left by the previous bounds for randomized non-adaptive
partial detection with unknown $d$. 

\paragraph{Open problem.} Develop a scale-locality theory for partial detection under threshold variants, gap-threshold tests, and noisy group-testing models. In particular, determine whether the unknown-\(d\) cost remains an additional logarithmic scale factor, and identify the optimal dependence on the threshold, the gap size, or the noise rate.

\end{document}